\newtheorem{problem}{Problem}
\newtheorem{proposition}{Proposition}
\newtheorem{definition}{Definition}
\newtheorem{lemma}{Lemma}
\newtheorem{theorem}{Theorem}
\newcommand{\argmin}{\mathop{\rm arg~min}\limits}
\title{\LARGE \bf
Maximum Hands-off Control without Normality Assumption
}
\author{Takuya Ikeda$^{1}$ and Masaaki Nagahara$^{2}$
\thanks{*This research was supported in part by JSPS KAKENHI Grant Numbers
26120521, 15K14006, and 15H02668.}
\thanks{$^{1}$Takuya Ikeda and $^{2}$Masaaki Nagahara are with Graduate School of Informatics, 
         Kyoto University, Kyoto, 606-8501,  Japan.
        {\tt\small ikeda.t@acs.i.kyoto-u.ac.jp} (T. Ikeda), {\tt\small nagahara@ieee.org} (M. Nagahara)}%
}
\begin{document}

\maketitle
\thispagestyle{empty}
\pagestyle{empty}

\begin{abstract}
Maximum hands-off control is a control that has the minimum $L^0$ norm
among all feasible controls.
It is known that the maximum hands-off (or $L^0$-optimal) control problem is
equivalent to the $L^1$-optimal control under the assumption of normality.
In this article,
we analyze the maximum hands-off control for linear time-invariant systems
without the normality assumption.
For this purpose, we introduce the $L^p$-optimal control with $0<p<1$,
which is a natural relaxation of the $L^0$ problem.
By using this, we investigate the existence and the bang-off-bang property
(i.e. the control takes values of $\pm1$ and $0$)
of the maximum hands-off control.
We then describe a general relation between the maximum hands-off
control and the $L^1$-optimal control.
We also prove the continuity and convexity property of the value function,
which plays an important role to prove the stability when the
(finite-horizon) control is extended to
model predictive control.
\end{abstract}

\section{INTRODUCTION}
\label{sec:intro}
In some situations, the control effort can be dramatically reduced 
by {\em hands-off control}, holding the control value exactly zero over a time interval.
The hands-off control is effective in hybrid/electric vehicles, railway vehicles, and
networked/embedded systems~\cite{NagQueNes14a,NagQueNes16}.

Motivated by these applications, 
recently, a novel control method, called {\em maximum hands-off control},
has been proposed in \cite{NagQueNes13,NagQueNes16}.
The purpose of maximum hands-off control is to maximize the time duration 
where the control value is exactly zero among all feasible controls.
The hands-off property is related to sparsity measured by the $L^0$ norm of a signal, 
defined by the total length of the intervals over which the signal takes non-zero values.
This motivates the use of the cost function in which the control effort is penalized via the $L^0$ norm.
The maximum hands-off control, in other words, seeks the sparsest
(or $L^0$-optimal) control among all feasible controls,
and hence the maximum hands-off control is also called {\em sparse optimal control} or {\em $L^0$-optimal control}.

A mathematical difficulty in the maximum hands-off control is 
the discontinuity and the non-convexity of the $L^0$ cost function.
Hence,
recent works~\cite{NagQueNes16,Ikeda_ascc} have proposed to use
the $L^1$ norm for enhancing sparsity,
as often seen in compressed sensing~\cite{Don,EldKut}.
In~\cite{Ikeda_ascc}, 
under the normality assumption (e.g. the plant model is controllable and the $A$-matrix is nonsingular),
the equivalence is proved between the $L^0$-optimal control and the $L^1$-optimal control.
The continuity and the convexity of the value function
is also proved under the same assumption.

Alternatively, a very recent work~\cite{ChaNagQueMal} has proved 
the existence theorem of the maximum hands-off control without the normality assumption, 
by directly dealing with the maximum hands-off control problem without the aid of 
smooth or convex relaxation.
As the necessary condition,
any maximum hands-off control is also proved to have the {\em bang-off-bang property}.
However, the sufficient condition for a control having the bang-off-bang property 
to be $L^0$-optimal is not obtained.

In the present article,
we examine the maximum hands-off control without the normality assumption,
by introducing the $L^p$-optimal control with $0<p<1$.
As will be described in Section \ref{sec:L0-optimal},
$L^p$-optimal control is a relaxation of the maximum hands-off (i.e. $L^0$-optimal) control.
Indeed, the equivalence holds between the $L^0$-optimal control and the $L^p$-optimal control.
The purpose of this article is not only to prove the existence and the bang-off-bang properties
of the maximum hands-off control,
but also to show a general relation between the maximum hands-off control and the $L^1$-optimal control.
The relation leads to the sufficient and necessary condition for a control 
having the bang-off-bang property to be $L^0$-optimal, 
which is not obtained in the recent works.
Also, it leads the equivalence between the value functions in the $L^0$-optimal and the $L^1$-optimal controls, 
by which we prove the convexity and the continuity of the value function.
This property guarantees the stability when the (finite-horizon) maximum hands-off control is extended to model predictive control, as discussed in \cite{IkeNagOno-soav}.


The remainder of this paper is organized as follows: 
In Section~\ref{sec:math}, we give mathematical preliminaries for our subsequent discussion.
In Section~\ref{sec:L0-optimal}, we define the maximum hands-off control problem,
and investigate it via the $L^p$-optimal control.
We show the existence and the bang-off-bang property of the maximum hands-off control
and the relation between the maximum hands-off control and the $L^1$-optimal control.
Section~\ref{sec:value} confirms the continuity and the convexity of the value function.
Section~\ref{sec:example} presents an example to illustrate the difference 
between the maximum hands-off control and the $L^1$-optimal control,
by showing the existence of an $L^1$-optimal control that is not $L^0$-optimal.
In Section~\ref{sec:conclusion}, we offer concluding remarks.

\section{MATHEMATICAL PRELIMINARIES}
\label{sec:math}
This section reviews basic definitions, facts, and notation that will be used
throughout the paper.

Let $n$ be a positive integer.
For a vector $x\in{\mathbb{R}}^n$
and a scalar $\varepsilon>0$, 
the {\em $\varepsilon$-neighborhood} of $x$ is defined by
\[
 {\mathcal B}(x,\varepsilon)\triangleq\{y\in\mathbb{R}^n: \|y-x\|<\varepsilon\},
\] 
where $\|\cdot\|$ denotes the Euclidean norm in ${\mathbb{R}}^n$.
Let ${\mathcal X}$ be a subset of ${\mathbb{R}}^n$.
A point $x\in {\mathcal X}$ is called an {\em interior point} of ${\mathcal X}$ if there exists $\varepsilon>0$ 
such that ${\mathcal B}(x,\varepsilon)\subset {\mathcal X}$. 
The {\em interior} of ${\mathcal X}$ is the set of all interior points of ${\mathcal X}$, 
and we denote the interior of ${\mathcal X}$ by $\mathrm{int}{\mathcal X}$.
A point $x\in\mathbb{R}^n$ is called an {\em adherent point} of ${\mathcal X}$ 
if ${\mathcal B}(x,\varepsilon)\cap {\mathcal X} \neq\emptyset$ for every $\varepsilon>0$, 
and the {\em closure} of ${\mathcal X}$ is the set of all adherent points of ${\mathcal X}$.
A set ${\mathcal X}\subset{\mathbb{R}}^n$ is said to be {\em closed} if 
${\mathcal X}=\overline{{\mathcal X}}$, where $\overline{{\mathcal X}}$ is the closure of ${\mathcal X}$.
The {\em boundary} of ${\mathcal X}$ is the set of all points in the closure of ${\mathcal X}$, 
not belonging to the interior of ${\mathcal X}$, and we denote the boundary of ${\mathcal X}$ by 
$\partial {\mathcal X}$, i.e., $\partial {\mathcal X}= \overline{{\mathcal X}}-\mathrm{int}{\mathcal X}$, 
where $\mathcal{X}_1-\mathcal{X}_2$ is the set of all points which belong to the set ${\mathcal X}_1$ 
but not to the set ${\mathcal X}_2$. 
A set ${\mathcal X}\subset{\mathbb R}^n$ is said to be {\em convex} if,
for any $x,y\in{\mathcal X}$ and any $\lambda\in[0,1]$,
$(1-\lambda)x+\lambda y$ belongs to ${\mathcal{X}}$.


A real-valued function $f$ defined on a convex set ${\mathcal{C}}\subset{\mathbb{R}^n}$ 
is said to be {\em convex} if 
\[
f\bigl((1-\lambda)x+\lambda y\bigr)\leq(1-\lambda)f(x)+\lambda f(y),
\]
for all $x$, $y\in \mathcal{C}$ and all $\lambda\in(0,1)$.


Let $T>0$. 
For a continuous-time signal $u(t)$ over a time interval $[0, T]$, 
we define its {\em $L^p$ and $L^{\infty}$ norms} respectively by
\[
 \|u\|_{p} \triangleq \bigg\{\int_{0}^{T}|u(t)|^{p} dt\bigg\}^{1/p},\quad
 \|u\|_{\infty} \triangleq \sup_{t\in[0, T]}|u(t)|,
\]
where $p\in(0,\infty)$.
Note that $\|\cdot\|_p$ for $p\in(0,\,1)$ is not a norm but a quasi-norm since it fails the triangle inequality~\cite{KalPecRob}.  
We simply denote the set of all signals with $\|u\|_{p}<\infty$ by $L^p$ instead of $L^p[0,\,T]$. 
We define the $L^0$ norm of a signal $u$ on the interval $[0, T]$ as
\[
  \|u\|_{0}\triangleq m(\{t\in[0, T]: u(t)\neq0\}),
\]
where $m$ is the Lebesgue measure on ${\mathbb{R}}$.
Note that $L^0$ ``norm'' is not a norm since it fails the homogeneity property, that is,
for any non-zero scalar $\alpha$ such that $|\alpha|\neq1$,
we have $\|\alpha u\|_0=\|u\|_0\neq|\alpha| \|u\|_0$
for any $u\neq0$.
The notation $\|\cdot\|_0$ derives from the equation in Proposition \ref{pro:L0=limLp}.

\section{MAXIMUM HANDS-OFF CONTROL PROBLEM}
\label{sec:L0-optimal}
In this paper, we consider a linear time-invariant system represented by
\begin{equation}
\dot{x}(t)=Ax(t)+Bu(t), \quad 0\leq t\leq T,
\label{eq:S}
\end{equation}
where 
$A\in\mathbb{R}^{n\times n}$, $B\in\mathbb{R}^{n\times 1}$, 
and $T>0$ is a fixed final time of control.
We here assume single-input control for simplicity.

For the system \eqref{eq:S}, 
we call a control $u\in L^1$ \emph{feasible} if it steers 
$x(t)$ from a given initial state $x(0)=\xi\in\mathbb{R}^n$ to the origin at time $T$ (i.e., $x(T)=0$)
and satisfies the magnitude constraint
$\|u\|_{\infty}\leq1$.
We denote by ${\mathcal U}(\xi)$ the set of all feasible controls for an initial state $\xi\in{\mathbb{R}}^n$, that is,
\[
 \mathcal{U}(\xi)\triangleq
 \biggl\{u\in L^{1}: \int_{0}^{T} e^{-At}Bu(t)dt= -\xi,\quad \|u\|_{\infty}\leq 1\biggr\}.
\]

The control objective is to obtain a control $u\in{\mathcal{U}(\xi)}$
that has the maximum time duration on which $u(t)$ takes $0$.
In other words, 
we seek the control that has the minimum $L^0$ norm among 
all feasible controls in ${\mathcal{U}}(\xi)$.
This optimal control problem is called the \emph{maximum hands-off control problem}.
This is formulated as follows.
\begin{problem}[maximum hands-off control problem]
\label{prob:L0-optimal}
For a given initial state $\xi\in\mathbb{R}^n$, 
find a feasible control $u\in\mathcal{U}(\xi)$ that minimizes
\[J(u)\triangleq \|u\|_0.\]
\end{problem}
We call the optimal control the {\em maximum hands-off control}.

Note that the cost function $J(u)$ can be rewritten as
\[
  J(u) = \int_0^T \phi_0(u(t))~dt,
\] 
where $\phi_0$ is the $L^0$ kernel function defined by
\begin{equation}
  \phi_0(u) \triangleq \begin{cases} 1, & \text{~if~} u\neq 0,\\ 0, & \text{~if~} u=0. \end{cases}
\label{eq:L0-kernel}
\end{equation}
Obviously, 
the kernel function $\phi_0(u)$ is discontinuous at $u=0$ and non-convex.
Also, the cost function $J(u)$ is non-convex, and it has a strong discontinuity.
Indeed,
for any functions $u\neq0$, $v=0$, and any scalar $\lambda\in(0, 1)$, 
we have $\|\lambda u+(1-\lambda) v\|_0=\|u\|_0$.
On the other hand, we have $\lambda\|u\|_0+(1-\lambda)\|v\|_0=\lambda\|u\|_0<\|u\|_0$.
Although the sequence of constant functions $u_k=1/k$ on $[0, T]$ converges to $0$ uniformly,
$J(u_k)$ takes $T$ for any positive integer $k$, and hence it does not converge to $0$.
In contrast, in this paper,
we will show that the value function is continuous and convex on the domain.

First,
we show the existence and the bang-off-bang property of maximum hands-off control 
via {\em $L^p$-optimal control problem}.

\subsection{$L^p$-Optimal Control}
\label{subsec:Lp-optimal}
Here, 
we examine the $L^p$-optimal control problem,
which is formulated as follows.
\begin{problem}[$L^p$-optimal control problem]
\label{prob:Lp}
For a given initial state $\xi\in\mathbb{R}^n$, 
find a feasible control $u\in\mathcal{U}(\xi)$ that minimizes 
\[
  J_{p}(u)\triangleq\|u\|_{p}^{p},
\]
where $p \in (0,\, 1)$.
\end{problem}

We call the solutions to this problem the {\em $L^p$-optimal control}, 
for which the following proposition is fundamental.
\begin{proposition}  
\label{pro:L0=limLp}
For $f\in L^{1}$, we have
\[
  \|f\|_{0} = \lim_{p\to0+} \|f\|_{p}^{p}.
\]
\end{proposition}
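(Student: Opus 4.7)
The plan is to apply the Lebesgue dominated convergence theorem to the family of integrands $|f(t)|^{p}$ as $p\to 0+$.

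First, I would identify the pointwise limit. For each fixed $t\in[0,T]$, if $f(t)=0$ then $|f(t)|^{p}=0$ for every $p>0$, while if $f(t)\neq 0$ then $|f(t)|^{p}=\exp(p\ln|f(t)|)\to 1$ as $p\to 0+$. Hence, for every $t$,
\[
\lim_{p\to 0+}|f(t)|^{p}=\phi_{0}(f(t)),
\]
with $\phi_{0}$ as in \eqref{eq:L0-kernel}.

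Next, I would produce a $p$-independent integrable dominating function valid for all $p\in(0,1]$. Set $g(t)\triangleq 1+|f(t)|$. On $\{t:|f(t)|\leq 1\}$ we have $|f(t)|^{p}\leq 1\leq g(t)$, while on $\{t:|f(t)|>1\}$ we have $|f(t)|^{p}\leq |f(t)|\leq g(t)$ since $p\leq 1$. Because $[0,T]$ has finite Lebesgue measure and $f\in L^{1}$, $g$ lies in $L^{1}[0,T]$.

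With these two ingredients, the dominated convergence theorem gives
\[
\lim_{p\to 0+}\|f\|_{p}^{p}=\lim_{p\to 0+}\int_{0}^{T}|f(t)|^{p}\,dt=\int_{0}^{T}\phi_{0}(f(t))\,dt=m(\{t\in[0,T]:f(t)\neq 0\})=\|f\|_{0},
\]
which is the claim. The only real subtlety, and the step I expect to require the most care, is the construction of the dominating function: one must verify the bound uniformly in $p$ on both $\{|f|\leq 1\}$ and $\{|f|>1\}$, exploiting the finite measure of $[0,T]$ on the former region and the monotonicity $|f|^{p}\leq |f|$ for $p\leq 1$ on the latter. Everything else is a direct invocation of standard measure-theoretic machinery.
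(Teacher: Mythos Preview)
Your proof is correct. The paper's argument is a close cousin: it also splits according to whether $|f(t)|\leq 1$ or $|f(t)|>1$, but instead of building a single dominating function and invoking dominated convergence, it applies the monotone convergence theorem separately on each region (noting that $p\mapsto |f(t)|^{p}$ is monotone increasing as $p\downarrow 0$ on $\{|f|\leq 1\}$ and monotone decreasing on $\{|f|>1\}$). Your route packages the two regions into one integral at the cost of exhibiting the dominant $g=1+|f|$; the paper's route avoids the dominant but has to treat the two integrals separately. Both are equally elementary and equally rigorous.
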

\begin{proof}
See Appendix.
\end{proof}

From now on,
we show the existence and the bang-off-bang property of the $L^p$-optimal control.
Let us define the set of all initial states for which there exist feasible controls,
which is known as the {\em reachable set} at time $T$.

\begin{definition}
For the system \eqref{eq:S}, the reachable set $\mathcal{R}$ at time $T$ is defined by
\[
 \mathcal{R}\triangleq\bigg\{\int_{0}^{T}e^{-At}Bu(t)dt: \|u\|_{\infty}\leq 1\bigg\} \subset   
  \mathbb{R}^n.
\]
\end{definition}

The following lemma states the existence and the bang-off-bang property.
\begin{lemma}  
\label{lem:Lp-optimal}
For each initial state in the reachable set $\mathcal{R}$, 
there exist $L^p$-optimal controls,
and they take only $\pm 1$ and $0$ on the time interval $[0,\,T]$.
\end{lemma}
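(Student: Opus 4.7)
The pointwise inequality $|v|^{p}\ge|v|$ for $v\in[-1,1]$ and $p\in(0,1)$, with equality iff $|v|\in\{0,1\}$, is the engine of the whole proof. Integrated over $[0,T]$ against any feasible control it yields
\[
  J_{p}(u)=\|u\|_{p}^{p}\;\ge\;\|u\|_{1}, \qquad u\in\mathcal{U}(\xi),
\]
with equality characterising bang-off-bang controls. This ties the $L^{p}$ problem to the classical minimum-fuel ($L^{1}$) problem, and drives both the existence claim and the bang-off-bang structure.

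First, I would invoke the existence of an $L^{1}$-optimal control $u^{\star}\in\mathcal{U}(\xi)$ that already takes values in $\{-1,0,+1\}$ almost everywhere. Existence of \emph{some} $L^{1}$-minimizer is standard: $\mathcal{U}(\xi)$ is convex and weak-$\ast$ compact in $L^{\infty}$ (the reachability constraint is weak-$\ast$ closed since $e^{-At}B\in L^{1}$), while $\|\cdot\|_{1}$ is weak-$\ast$ lower semicontinuous as a supremum of weak-$\ast$ continuous linear functionals. A Lyapunov convexity / extreme-point argument on the compact convex set of $L^{1}$-minimizers then extracts one of bang-off-bang type. For this $u^{\star}$,
\[
  J_{p}(u^{\star})=\|u^{\star}\|_{1}=\min_{v\in\mathcal{U}(\xi)}\|v\|_{1}\;\le\;\|v\|_{1}\;\le\;J_{p}(v)\quad\text{for every }v\in\mathcal{U}(\xi),
\]
so $u^{\star}$ is simultaneously $L^{p}$-optimal, yielding existence.

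For the bang-off-bang property of an arbitrary $L^{p}$-optimal $\tilde u$, the same chain squeezes
\[
  J_{p}(\tilde u)=J_{p}(u^{\star})=\|u^{\star}\|_{1}\;\le\;\|\tilde u\|_{1}\;\le\;J_{p}(\tilde u),
\]
forcing $\|\tilde u\|_{1}=J_{p}(\tilde u)$, i.e.\ $\int_{0}^{T}(|\tilde u(t)|^{p}-|\tilde u(t)|)\,dt=0$. The integrand is non-negative on $\{|u|\le 1\}$ and vanishes only where $|\tilde u|\in\{0,1\}$, so $\tilde u(t)\in\{-1,0,+1\}$ almost everywhere. The one nontrivial obstacle is producing a bang-off-bang $L^{1}$-minimizer with \emph{no} normality assumption: maximum-principle routes typically need normality to rule out singular arcs, so I would instead rely on Lyapunov's convexity theorem applied to the vector measure $E\mapsto\int_{E}e^{-At}B\,dt$ on measurable $E\subset[0,T]$, extracting an extremal $L^{1}$-minimizer and verifying directly that it takes only the values $\pm 1$ and $0$.
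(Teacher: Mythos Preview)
Your argument is correct, but it follows a genuinely different route from the paper's. The paper treats the $L^{p}$ problem in isolation: existence is imported from Neustadt's existence theorem for non-convex integrands, and the bang-off-bang property is obtained by applying Pontryagin's minimum principle and analysing the pointwise minimiser of $|u|^{p}+cu$ over $|u|\le 1$ (which, because $p<1$ makes the map concave on $(0,1]$ and $[-1,0)$, lands only at $-1$, $0$, or $1$). Your approach instead bypasses both tools by reducing the $L^{p}$ problem to the $L^{1}$ problem via the elementary inequality $|v|^{p}\ge |v|$ on $[-1,1]$, and then leaning on the existence of a bang-off-bang $L^{1}$-minimiser (Grimmell's theorem, which the paper itself invokes, but only later in the proof of Theorem~\ref{thm:relation-L0&L1}). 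In effect you have collapsed Lemma~\ref{lem:Lp-optimal} and Theorem~\ref{thm:relation-L0&L1} into a single argument: your chain already shows that every $L^{p}$-optimal control is $L^{1}$-optimal and that the optimal values coincide. The trade-off is that the paper's proof keeps the $L^{p}$ analysis self-contained (only Neustadt plus the minimum principle), whereas your proof front-loads the one genuinely nontrivial ingredient---the bang-off-bang $L^{1}$-minimiser without normality---which you correctly flag as requiring a Lyapunov convexity or extreme-point argument rather than the maximum principle. That step is sketched rather than carried out, but the result you need is exactly the Grimmell theorem the paper cites, so the proof stands.
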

\begin{proof}
The existence of $L^p$-optimal controls is shown in~\cite{Neu},
and we here prove the bang-off-bang property.

Fix any initial state $\xi\in\mathcal{R}$, 
and take any $L^p$-optimal control $u^{\ast}(t)$ for the initial state $\xi$,
and let $x^{\ast}$ denote the resultant state trajectory according to the control $u^{\ast}$. 

The Hamiltonian function for the $L^p$-optimal control problem is defined as
\begin{equation}
 H(x,\,q,\,u)\triangleq |u|^p + q^{\mathrm{T}}(Ax+Bu),
 \label{eq:Hamiltonian_Lp}
\end{equation} 
where $q\in\mathbb{R}^n$ is the costate vector. 
From Pontryagin's minimum principle~\cite{AthFal}, 
there exists a costate vector $q^{\ast}$ that satisfies:
\begin{align}
  &H(x^{\ast},\,q^{\ast},\,u^{\ast})\leq H(x^{\ast},\,q^{\ast},\,u),\quad \forall u\in \mathcal{U}(\xi),\label{ineq:Hamiltonian_Lp}\\
  &{\dot x^{\ast}(t)}=Ax^{\ast}(t)+Bu^{\ast}(t),\quad {\dot q^{\ast}(t)}=-A^{\mathrm{T}}q^{\ast}(t),\notag\\
  &x^{\ast}(0)=\xi,\quad x^{\ast}(T)=0.\notag
\end{align}
From \eqref{eq:Hamiltonian_Lp} and \eqref{ineq:Hamiltonian_Lp}, 
the $L^p$-optimal control $u^{\ast}$ is given by 
\begin{equation}
 u^{\ast}(t)=\argmin_{|u|\leq1} \, |u|^p + (q^{\ast}(t))^{\mathrm{T}}Bu ,\quad t\in[0,\,T].
\label{eq:argmin}
\end{equation}
Hence, from some elementary computation, we have
\begin{equation}
  u^{\ast}(t)=
    \begin{cases}
    1,                         & \mbox{if } (q^{\ast}(t))^{\mathrm{T}}B<-1,\\
    0,                         & \mbox{if } -1<(q^{\ast}(t))^{\mathrm{T}}B<1,\\
    -1,                       & \mbox{if } 1<(q^{\ast}(t))^{\mathrm{T}}B,\\
    0 \mbox{ or } 1,   & \mbox{if } (q^{\ast}(t))^{\mathrm{T}}B=-1,\\
    -1 \mbox{ or } 0, & \mbox{if } (q^{\ast}(t))^{\mathrm{T}}B=1.
    \end{cases}
\end{equation}
on $[0, T]$.
This means that the $L^p$-optimal control $u^{\ast}(t)$ takes only $\pm 1$ and $0$ on $[0,\,T]$.
\end{proof}

From this lemma,
we can show the equivalence between 
the maximum hands-off control and the $L^p$-optimal control.

\begin{theorem}  
\label{thm:relation_L0-Lp}
Let any initial state $\xi\in\mathcal{R}$ be fixed.
Let $\mathcal{U}_{0}^{\ast}(\xi)$ and $\mathcal{U}_{p}^{\ast}(\xi)$ be the sets 
of all maximum hands-off (i.e. $L^0$-optimal) 
controls and all $L^p$-optimal controls, respectively. 
Then we have 
\begin{equation}
  \mathcal{U}_{0}^{\ast}(\xi)=\mathcal{U}_{p}^{\ast}(\xi).
\label{eq:L0-Lp}
\end{equation}
Furthermore, we have 
\begin{equation}
  \|u_{0}\|_{0}=\|u_{p}\|_{p}^{p}
\label{eq:L0-Lp-value}
\end{equation}
for any $u_{0}\in \mathcal{U}_{0}^{\ast}(\xi)$ and $u_{p}\in \mathcal{U}_{p}^{\ast}(\xi)$.
\end{theorem}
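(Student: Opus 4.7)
The plan is to exploit the bang-off-bang property established in Lemma \ref{lem:Lp-optimal}. The decisive elementary observation is that for any scalar $x$ with $|x|\leq 1$ and any $p\in(0,1)$, we have $|x|^{p}\leq \phi_{0}(x)$, with equality if and only if $|x|\in\{0,1\}$. Integrating this pointwise inequality over $[0,T]$ yields, for every feasible control $u\in\mathcal{U}(\xi)$,
\[
\|u\|_{p}^{p}\leq\|u\|_{0},
\]
with equality whenever $u$ takes values only in $\{-1,0,+1\}$. This single inequality, combined with Lemma \ref{lem:Lp-optimal}, is the engine of the whole argument.

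First, I would fix $u_{p}^{\ast}\in\mathcal{U}_{p}^{\ast}(\xi)$ (which exists by Lemma \ref{lem:Lp-optimal}) and any $u_{0}^{\ast}\in\mathcal{U}_{0}^{\ast}(\xi)$ (existence of the latter will follow as a by-product, but one may also start by taking $u_{0}^{\ast}$ hypothetically and later justify its existence by the equality proved here). Since $u_{p}^{\ast}$ is bang-off-bang, equality holds in the above inequality for $u_{p}^{\ast}$, so $\|u_{p}^{\ast}\|_{p}^{p}=\|u_{p}^{\ast}\|_{0}$. Then using $L^{p}$-optimality of $u_{p}^{\ast}$ and the general inequality applied to $u_{0}^{\ast}$,
\[
\|u_{p}^{\ast}\|_{0}=\|u_{p}^{\ast}\|_{p}^{p}\leq\|u_{0}^{\ast}\|_{p}^{p}\leq\|u_{0}^{\ast}\|_{0}.
\]
Combining this with $L^{0}$-optimality of $u_{0}^{\ast}$ (namely $\|u_{0}^{\ast}\|_{0}\leq\|u_{p}^{\ast}\|_{0}$) forces equality throughout. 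This simultaneously shows $u_{p}^{\ast}\in\mathcal{U}_{0}^{\ast}(\xi)$, hence $\mathcal{U}_{p}^{\ast}(\xi)\subseteq\mathcal{U}_{0}^{\ast}(\xi)$, and yields the numerical identity $\|u_{0}^{\ast}\|_{0}=\|u_{p}^{\ast}\|_{p}^{p}$, which is \eqref{eq:L0-Lp-value}.

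For the reverse inclusion $\mathcal{U}_{0}^{\ast}(\xi)\subseteq\mathcal{U}_{p}^{\ast}(\xi)$, I take any $u_{0}^{\ast}\in\mathcal{U}_{0}^{\ast}(\xi)$ and chain
\[
\|u_{p}^{\ast}\|_{p}^{p}\leq\|u_{0}^{\ast}\|_{p}^{p}\leq\|u_{0}^{\ast}\|_{0}=\|u_{p}^{\ast}\|_{0}=\|u_{p}^{\ast}\|_{p}^{p},
\]
where the first inequality is $L^{p}$-optimality of $u_{p}^{\ast}$, the second is the basic pointwise bound, the third equality is the numerical identity just established, and the last is again the bang-off-bang collapse. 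All terms coincide, so $u_{0}^{\ast}$ attains the $L^{p}$-minimum and therefore lies in $\mathcal{U}_{p}^{\ast}(\xi)$. This completes \eqref{eq:L0-Lp}.

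The main obstacle is conceptual rather than technical: one must notice that the pointwise inequality $|x|^{p}\leq\phi_{0}(x)$ on $[-1,1]$ becomes an equality exactly on the set $\{-1,0,+1\}$ where the $L^{p}$-minimizers live by Lemma \ref{lem:Lp-optimal}. Once this pincer is in place, the two inclusions and the value-function identity fall out of a short chain of inequalities; no further estimates, compactness, or Pontryagin arguments are needed beyond what Lemma \ref{lem:Lp-optimal} already supplies.
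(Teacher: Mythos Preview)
Your proof is correct and follows essentially the same approach as the paper: both rely on the pointwise inequality $|u(t)|^{p}\leq\phi_{0}(u(t))$ for $|u(t)|\leq 1$ (with equality on $\{-1,0,1\}$), the bang-off-bang property from Lemma~\ref{lem:Lp-optimal}, and the same chain of inequalities to pinch the two inclusions and the value identity. The only cosmetic difference is ordering: the paper first applies the chain with an \emph{arbitrary} feasible $u$ to get $\|u_{p}\|_{0}\leq\|u\|_{0}$, which immediately shows $u_{p}\in\mathcal{U}_{0}^{\ast}(\xi)$ and hence $\mathcal{U}_{0}^{\ast}(\xi)\neq\emptyset$ before any $u_{0}^{\ast}$ is invoked, whereas your write-up introduces $u_{0}^{\ast}$ up front and defers its existence to a parenthetical remark; making that step explicit would remove the slight circularity in your exposition.
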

\begin{proof}
From Lemma \ref{lem:Lp-optimal},
we can take any $L^p$-optimal control $u_p(t)$, which takes only $\pm 1$ and $0$ on $[0,\,T]$.
Then we have
\begin{align}
  \begin{aligned}
   \|u_{p}\|_{p}^{p}
   &=\int_{0}^{T}|u_{p}(t)|^p dt
     =\int_{\{t:u_{p}(t)\neq0\}}|u_{p}(t)|dt\\
   &=\int_{\{t:u_{p}(t)\neq0\}}1 dt
     =\|u_{p}\|_{0}.
  \end{aligned}\label{lp_optimal}
\end{align}
For any $u\in\mathcal{U}(\xi)$,
we have
\begin{align}
  \begin{aligned}
    \|u\|_{p}^{p}
    &=\int_{0}^{T}|u(t)|^{p}dt
      =\int_{\{t:u(t)\neq0\}}|u(t)|^{p}dt\\
    &\leq\int_{\{t:u(t)\neq0\}}1 dt
      =\|u\|_{0}.
  \end{aligned}\label{any}
\end{align}
From \eqref{lp_optimal}, \eqref{any} and the optimality of $u_{p}$, 
we have
\[
  \|u_{p}\|_{0} = \|u_{p}\|_{p}^{p} \leq \|u\|_{p}^{p} \leq \|u\|_{0}
\]
for any $u\in \mathcal{U}(\xi)$.
This gives $u_{p}\in \mathcal{U}_{0}^{\ast}(\xi)$, 
and hence $\mathcal{U}_{p}^{\ast}(\xi) \subset \mathcal{U}_{0}^{\ast}(\xi)$.
Therefore the set $\mathcal{U}_{0}^{\ast}(\xi)$ is not empty.

Take any maximum hands-off control $u_{0}\in\mathcal{U}_{0}^{\ast}(\xi)$.
From \eqref{any} and the optimality of $u_0$ and $u_p$, we have
\[
  \|u_{p}\|_{p}^{p} \leq \|u_{0}\|_{p}^{p} \leq \|u_0\|_{0} \leq \|u_p\|_{0} = \|u_p\|_{p}^{p},
\]
which yields
\begin{align}
 \|u_0\|_{p}^{p}=\|u_{p}\|_{p}^{p}, \label{eq:L0 in Lp}\\
 \|u_0\|_0=\|u_p\|_{p}^{p}. \label{eq:L0-val=Lp-val}
\end{align}
Equation \eqref{eq:L0 in Lp} gives $\mathcal{U}_{0}^{\ast}(\xi) \subset \mathcal{U}_{p}^{\ast}(\xi)$, 
and hence \eqref{eq:L0-Lp} follows.
Equation \eqref{eq:L0-val=Lp-val} means just the last statement \eqref{eq:L0-Lp-value}.
\end{proof}

In summary,
the maximum hands-off control is characterised 
as follows:

\begin{theorem}  
\label{thm:sparse-optimal}
For each initial state $\xi\in\mathcal{R}$, there exist maximum hands-off controls,
and they take only $\pm 1$ and $0$ on $[0,\,T]$.
\end{theorem}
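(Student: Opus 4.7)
The plan is to derive Theorem \ref{thm:sparse-optimal} as an immediate corollary of Lemma \ref{lem:Lp-optimal} and Theorem \ref{thm:relation_L0-Lp}, since these two results already carry all the content that is needed.

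For the existence part, I would fix any $\xi \in \mathcal{R}$ and invoke Lemma \ref{lem:Lp-optimal} to obtain at least one $L^p$-optimal control $u_p \in \mathcal{U}_p^{\ast}(\xi)$; in particular $\mathcal{U}_p^{\ast}(\xi)$ is nonempty. Then, by the equality $\mathcal{U}_0^{\ast}(\xi) = \mathcal{U}_p^{\ast}(\xi)$ established in Theorem \ref{thm:relation_L0-Lp}, the set $\mathcal{U}_0^{\ast}(\xi)$ is also nonempty, which gives the existence of a maximum hands-off control for every $\xi \in \mathcal{R}$.

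For the bang-off-bang property, I would take any $u_0 \in \mathcal{U}_0^{\ast}(\xi)$. Again by Theorem \ref{thm:relation_L0-Lp} we have $u_0 \in \mathcal{U}_p^{\ast}(\xi)$, so $u_0$ is an $L^p$-optimal control. Lemma \ref{lem:Lp-optimal} then says directly that $u_0$ takes only the values $\pm 1$ and $0$ on $[0,T]$, which is the required bang-off-bang conclusion.

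There is essentially no technical obstacle at this stage: the Pontryagin-based argument handling the minimum principle and the characterization of $u^{\ast}$ from $(q^{\ast})^{\mathrm T} B$ has already been carried out in Lemma \ref{lem:Lp-optimal}, and the transfer between $L^0$- and $L^p$-optimality was done in Theorem \ref{thm:relation_L0-Lp}. The only point worth emphasizing in the write-up is that the two inclusions of Theorem \ref{thm:relation_L0-Lp} play distinct roles here, namely $\mathcal{U}_p^{\ast}(\xi) \subset \mathcal{U}_0^{\ast}(\xi)$ supplies existence, while $\mathcal{U}_0^{\ast}(\xi) \subset \mathcal{U}_p^{\ast}(\xi)$ is what allows the structural property proved for $L^p$-optimal controls to be inherited by every maximum hands-off control.
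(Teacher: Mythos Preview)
Your proposal is correct and matches the paper's approach exactly: the paper presents Theorem~\ref{thm:sparse-optimal} as a summary of Lemma~\ref{lem:Lp-optimal} and Theorem~\ref{thm:relation_L0-Lp}, with no separate proof given. Your observation that the two inclusions in Theorem~\ref{thm:relation_L0-Lp} serve distinct purposes (one for existence, one for the bang-off-bang property) is a nice clarification that the paper leaves implicit.
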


From the definition of the reachable set $\mathcal{R}$,
this theorem states that 
the initial state $\xi$ exists in $\mathcal{R}$ if and only if maximum hands-off controls exist.

\subsection{Relation between Maximum Hands-Off Control and $L^1$-Optimal Control}
We briefly review the $L^1$-optimal control problem based on the discussion in \cite[Sec. 6-13]{AthFal},
and confirm the definition of the normality.

In the $L^1$-optimal control problem, 
for a given initial state $\xi$,
we seek the control that has the minimum $L^1$ norm among all feasible controls in ${\mathcal{U}}(\xi)$.
The optimal controls are called $L^1$-optimal controls.

We apply the Pontryagin's minimum principle.
Assume that there exists an $L^1$-optimal control $u^{\ast}(t)$.
Then there exists a vector $q^{\ast}(t)\in\mathbb{R}^n$ on $[0, T]$ such that 
 \begin{align*}
 &u^{\ast}(t)=
  \begin{cases}
  1,    &  \mbox{if } B^{\mathrm{T}} q^{\ast}(t)<-1,\\
  0,    &  \mbox{if } | B^{\mathrm{T}} q^{\ast}(t) |<1,\\
  -1,  &  \mbox{if } B^{\mathrm{T}} q^{\ast}(t)>1.
  \end{cases}\\
  & u^{\ast}(t)\in[0, 1], \quad \mspace{2mu}\mbox{if } B^{\mathrm{T}} q^{\ast}(t)=-1,\\
  & u^{\ast}(t)\in[-1, 0], \,\, \mbox{if } B^{\mathrm{T}} q^{\ast}(t)=1.
\end{align*}

Therefore, 
if $|B^{\mathrm{T}} q^{\ast}(t)|$ is not equal to $1$ at almost everywhere in $[0, T]$,
then the $L^1$-optimal control $u^{\ast}(t)$ can be determined uniquely,
and $u^{\ast}(t)$ takes only $0$ and $\pm 1$ on $[0, T]$.
Then the $L^1$-optimal control problem is called {\em normal}.
\begin{definition}[Normality]
\label{normal}
Define the set
\[\mathcal{I} \triangleq \{t\in[0,\,T]: |B^{\mathrm{T}} q^{\ast}(t)|=1\}.\]
If $m(\mathcal{I})=0$, then the $L^1$-optimal control problem is said to be normal.
\end{definition}

Theorem \ref{thm:sparse-optimal} reveals the general relation between maximum hands-off controls and $L^1$-optimal controls,
which is a generalization of a result in~\cite{Ikeda_ascc}.

\begin{theorem}
\label{thm:relation-L0&L1}
Fix any initial state $\xi\in\mathcal{R}$.
Let $\mathcal{U}_{0}^{\ast}(\xi)$ and $\mathcal{U}_{1}^{\ast}(\xi)$ be the sets 
of all maximum hands-off controls and all $L^1$-optimal controls, respectively. 
Then we have
\begin{equation}
  \mathcal{U}_{0}^{\ast}(\xi) \subset \mathcal{U}_{1}^{\ast}(\xi)
\label{eq:L0-L1-set}
\end{equation}
and 
\begin{equation}
\|u_{0}\|_{0}=\|u_{1}\|_{1}
\label{eq:L0-L1-value}
\end{equation}
for any $u_{0}\in \mathcal{U}_{0}^{\ast}(\xi)$ and $u_{1}\in \mathcal{U}_{1}^{\ast}(\xi)$.

In particular, if the $L^1$-optimal control problem is normal,
then we have $\mathcal{U}_{0}^{\ast}(\xi) = \mathcal{U}_{1}^{\ast}(\xi)$.
\end{theorem}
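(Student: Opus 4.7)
The plan is to prove \eqref{eq:L0-L1-set} and \eqref{eq:L0-L1-value} in one shot by showing that every maximum hands-off control is itself $L^1$-optimal, and then to treat the normality statement separately. The two ingredients I would exploit are the bang-off-bang property from Theorem \ref{thm:sparse-optimal} and the $L^0$--$L^p$ equivalence of Theorem \ref{thm:relation_L0-Lp}. First I would fix an arbitrary $u_{0}\in\mathcal{U}_{0}^{\ast}(\xi)$, which exists by Theorem \ref{thm:sparse-optimal} and satisfies $u_{0}(t)\in\{-1,0,1\}$ almost everywhere, so that $|u_{0}(t)|^{r}=|u_{0}(t)|$ for every $r>0$; in particular $\|u_{0}\|_{1}=\|u_{0}\|_{0}=\|u_{0}\|_{p}^{p}$ for every $p\in(0,1)$.

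Next I would derive the lower bound $\|v\|_{1}\ge\|u_{0}\|_{0}$ for every $v\in\mathcal{U}(\xi)$. For fixed $p\in(0,1)$, Theorem \ref{thm:relation_L0-Lp} implies $u_{0}\in\mathcal{U}_{p}^{\ast}(\xi)$, so $\|v\|_{p}^{p}\ge\|u_{0}\|_{p}^{p}=\|u_{0}\|_{0}$. Since $\|v\|_{\infty}\le 1$, the map $p\mapsto|v(t)|^{p}$ decreases monotonically to $|v(t)|$ as $p\to 1^{-}$ and is dominated by the integrable function $\mathbf{1}_{[0,T]}$, so dominated convergence yields $\|v\|_{p}^{p}\to\|v\|_{1}$. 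Passing to the limit produces $\|v\|_{1}\ge\|u_{0}\|_{0}=\|u_{0}\|_{1}$, showing that $u_{0}$ minimizes $\|\cdot\|_{1}$ over $\mathcal{U}(\xi)$, i.e., $u_{0}\in\mathcal{U}_{1}^{\ast}(\xi)$. Hence $\mathcal{U}_{1}^{\ast}(\xi)$ is nonempty, and for any $u_{1}\in\mathcal{U}_{1}^{\ast}(\xi)$ optimality forces $\|u_{1}\|_{1}=\|u_{0}\|_{1}=\|u_{0}\|_{0}$, delivering \eqref{eq:L0-L1-set} and \eqref{eq:L0-L1-value} simultaneously.

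For the last statement I would invoke the Pontryagin characterization of $L^1$-optimal controls recalled just above the theorem. Under normality $m(\mathcal{I})=0$, so any $u_{1}\in\mathcal{U}_{1}^{\ast}(\xi)$ takes values in $\{-1,0,1\}$ almost everywhere, whence $\|u_{1}\|_{0}=\|u_{1}\|_{1}=\|u_{0}\|_{0}$, so $u_{1}\in\mathcal{U}_{0}^{\ast}(\xi)$ and $\mathcal{U}_{0}^{\ast}(\xi)=\mathcal{U}_{1}^{\ast}(\xi)$. The main obstacle I anticipate is the limiting step $\lim_{p\to 1^{-}}\|v\|_{p}^{p}=\|v\|_{1}$: the pointwise bound $|v(t)|^{p}\ge|v(t)|$ runs in the wrong direction to chain directly with the $L^p$-optimality inequality, so convergence must genuinely be invoked, crucially using the magnitude constraint $\|v\|_{\infty}\le 1$ and the finiteness of the horizon $T$.
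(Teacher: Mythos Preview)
Your proof is correct, but it follows a genuinely different route from the paper's. The paper's argument invokes an external existence result (Grimmell) guaranteeing a bang-off-bang $L^1$-optimal control $u_1$ even in the absence of normality; with both $u_0$ and this special $u_1$ taking values in $\{-1,0,1\}$, one has $\|u_0\|_0=\|u_0\|_1$ and $\|u_1\|_1=\|u_1\|_0$, and the two optimality inequalities $\|u_0\|_0\le\|u_1\|_0$, $\|u_1\|_1\le\|u_0\|_1$ immediately collapse to the desired equalities. You instead bypass Grimmell entirely: from Theorem~\ref{thm:relation_L0-Lp} the control $u_0$ is $L^p$-optimal for every $p\in(0,1)$, giving $\|v\|_p^p\ge\|u_0\|_0$ for all feasible $v$, and the limit $p\to 1^{-}$ (legitimate by dominated convergence thanks to $\|v\|_\infty\le 1$ and $T<\infty$) yields $\|v\|_1\ge\|u_0\|_0=\|u_0\|_1$. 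This is fully self-contained within the paper's own $L^p$ machinery and nicely exhibits the $L^p$ problems as an interpolation between $L^0$ and $L^1$; the paper's route is shorter once the external reference is granted. For the normality clause the paper simply cites \cite{Ikeda_ascc}, whereas you supply the direct argument from the Pontryagin characterization, which is also fine.
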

\begin{proof}
From Theorem \ref{thm:sparse-optimal}, we can take any maximum hands-off control $u_0(t)$,
which takes only $\pm 1$ and $0$ on $[0,\,T]$.
There exist a control $u_1\in\mathcal{U}_{1}^{\ast}(\xi)$ which takes only $\pm 1$ and $0$ on $[0,\,T]$,
even if the $L^1$-optimal control problem fails the normality assumption~\cite{Grim}.
Then we have 
\begin{equation}
 \|u_0\|_0 = \|u_0\|_1,\quad
 \|u_1\|_1 = \|u_1\|_0.
\label{eq:bang-off-bang}
\end{equation}
From the optimality of $u_0$ and $u_1$, we also have
\begin{equation}
 \|u_0\|_0 \leq \|u_1\|_0,\quad
 \|u_1\|_1 \leq \|u_0\|_1.
\label{ineq:optimalities}
\end{equation}
It follows from \eqref{eq:bang-off-bang} and \eqref{ineq:optimalities} that 
\[
  \|u_0\|_1 = \|u_1\|_1,\quad
  \|u_0\|_0 = \|u_1\|_1.
\]
The first equation yields the relation \eqref{eq:L0-L1-set},
and the second equation is \eqref{eq:L0-L1-value}.
Finally, for the case under the normality assumption,
see \cite{Ikeda_ascc}.
\end{proof}

We note that any maximum hands-off control is always an $L^1$-optimal control, 
but the reverse does not necessarily hold.
As seen in the proof,
maximum hands-off controls are just $L^1$-optimal controls having the bang-off-bang property.
More precisely, if the normality assumption fails, then 
there exists an $L^1$-optimal control which is not $L^0$-optimal.
In Section \ref{sec:example}, we give such an example.
In contrast, for every initial state,
these optimal control problems always have the same optimal value.

\section{VALUE FUNCTION}
\label{sec:value}
In this section, we investigate the value function in maximum hands-off control.
The value function of an optimal control problem is defined as
the mapping from initial states to the optimal values.
The value functions in maximum hands-off control and $L^1$-optimal control are defined as
\begin{equation*}
  V(\xi)\triangleq\min_{u\in \mathcal{U}(\xi)} \|u\|_{0},\quad
  V_{1}(\xi)\triangleq\min_{u\in \mathcal{U}(\xi)}\|u\|_{1},
\end{equation*}
for $\xi\in\mathcal{R}$.

Here, 
we prove the continuity and the convexity of the value function $V(\xi)$ on $\mathcal{R}$
based on the discussion given in \cite{IkeNagOno-soav}.
As proved there,
these properties play an important role to prove the stability 
when the maximum hands-off control is extended to model predictive control.

First, let us show the convexity of $V(\xi)$.
\begin{theorem}
\label{thm:convex}
The value function $V(\xi)$ is convex on $\mathcal{R}$.
\end{theorem}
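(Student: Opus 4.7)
My plan is to reduce the convexity of $V$ to the convexity of the $L^1$ value function $V_1$ via Theorem \ref{thm:relation-L0&L1}, which asserts that $\|u_0\|_0 = \|u_1\|_1$ for any maximum hands-off control $u_0$ and any $L^1$-optimal control $u_1$ with the same initial state. This yields the pointwise identity $V(\xi) = V_1(\xi)$ on $\mathcal{R}$, so it suffices to show $V_1$ is convex.

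To show convexity of $V_1$, I would first note that the domain $\mathcal{R}$ is itself convex, since it is the image of the convex set $\{u \in L^\infty : \|u\|_\infty \le 1\}$ under the linear map $u \mapsto -\int_0^T e^{-At}Bu(t)\,dt$. Next, fix $\xi_1, \xi_2 \in \mathcal{R}$ and $\lambda \in [0,1]$, and pick $L^1$-optimal controls $u_1 \in \mathcal{U}_1^\ast(\xi_1)$ and $u_2 \in \mathcal{U}_1^\ast(\xi_2)$; existence follows from Lemma \ref{lem:Lp-optimal} combined with Theorem \ref{thm:relation-L0&L1} (or by direct $L^1$ arguments). Setting $u_\lambda \triangleq (1-\lambda)u_1 + \lambda u_2$, linearity of the integral gives
\[
\int_0^T e^{-At}Bu_\lambda(t)\,dt = -\bigl((1-\lambda)\xi_1 + \lambda \xi_2\bigr),
\]
and the triangle inequality for $\|\cdot\|_\infty$ gives $\|u_\lambda\|_\infty \le 1$, so $u_\lambda \in \mathcal{U}\bigl((1-\lambda)\xi_1 + \lambda\xi_2\bigr)$.

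Using the triangle inequality for the $L^1$ norm and the optimality of $u_1, u_2$, I then obtain
\[
V_1\bigl((1-\lambda)\xi_1 + \lambda \xi_2\bigr) \le \|u_\lambda\|_1 \le (1-\lambda)\|u_1\|_1 + \lambda\|u_2\|_1 = (1-\lambda)V_1(\xi_1) + \lambda V_1(\xi_2),
\]
which is convexity of $V_1$ on $\mathcal{R}$. Invoking $V \equiv V_1$ on $\mathcal{R}$ finishes the proof.

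The only real subtlety, and the place where the argument could have gone wrong, is that the $L^0$ cost itself is neither convex nor lower semicontinuous, so a direct convex-combination argument on the $L^0$ side is hopeless; the earlier equivalence between the $L^0$ and $L^1$ optimal values is precisely what salvages convexity for $V$. Everything else is a standard convex-analysis routine applied to the convex, linearly constrained set $\mathcal{U}(\xi)$.
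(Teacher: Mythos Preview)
Your proposal is correct and follows essentially the same route as the paper: reduce to $V_1$ via the identity $V\equiv V_1$ from Theorem~\ref{thm:relation-L0&L1}, then take a convex combination of $L^1$-optimal controls and use the triangle inequality for $\|\cdot\|_1$. Your additional remarks on the convexity of $\mathcal{R}$ and the existence of $L^1$-optimal controls are details the paper leaves implicit, but the argument is the same.
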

\begin{proof}
It is sufficient to show the convexity of $V_1(\xi)$,
since we have
\[
  V(\xi)=V_1(\xi),\quad \xi\in\mathcal{R}
\]
from Theorem \ref{thm:relation-L0&L1}.
Take any initial states $\xi$, $\eta\in\mathcal{R}$ and any scalar $\lambda\in(0,\,1)$.
Let $u_{\xi}(t)$ and $u_{\eta}(t)$ be 
$L^1$-optimal controls for the initial state $\xi$ and $\eta$, respectively.

Obviously, the following control 
\begin{equation}
 u \triangleq (1-\lambda)u_{\xi}+\lambda u_{\eta}
 \label{eq:u}
\end{equation} 
steers the state from the initial state $(1-\lambda)\xi+\lambda \eta$ 
to the origin at time $T$, and it satisfies $\|u\|_{\infty}\leq1$.
That is,  we have $u\in {\mathcal{U}}\bigl((1-\lambda)\xi+\lambda \eta\bigr)$. 
Therefore we have
\begin{align*}
 V_1(\lambda\xi+(1-\lambda)\eta)
  &\leq \|\lambda u_{\xi}+(1-\lambda)\eta\|_1\\
  &\leq \lambda\|u_{\xi}\|_1 + (1-\lambda)\|u_{\eta}\|_1\\
  &=\lambda V_1(\xi) + (1-\lambda)V_1(\eta),
\end{align*}
which shows the convexity of the value function $V_1$.
\end{proof}

Next, we show the continuity of $V(\xi)$.
For this, we prepare the following lemmas.
\begin{lemma}
\label{lem:closedness}
For any scalar $\alpha\geq 0$,
the following set
\[
  \mathcal{R}_{\alpha} \triangleq 
  \{\xi \in \mathcal{R}: V(\xi) \leq \alpha\}
\]
is closed.
\end{lemma}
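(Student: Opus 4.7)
The plan is to show that $\mathcal{R}_\alpha$ is sequentially closed. Let $\{\xi_k\}\subset\mathcal{R}_\alpha$ satisfy $\xi_k\to\xi$ in $\mathbb{R}^n$; I aim to show $\xi\in\mathcal{R}$ and $V(\xi)\leq\alpha$. The decisive preliminary reduction is Theorem~\ref{thm:relation-L0&L1}, which gives $V(\xi)=V_1(\xi)$ on $\mathcal{R}$, replacing the discontinuous, non-convex $L^0$ functional by the convex $L^1$ norm, for which weak-limit arguments are available.

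For each $k$, I would pick an $L^1$-optimal control $u_k\in\mathcal{U}(\xi_k)$, so that $\|u_k\|_\infty\leq 1$, $\|u_k\|_1=V_1(\xi_k)\leq\alpha$, and $\int_0^T e^{-At}Bu_k(t)\,dt=-\xi_k$. Since $[0,T]$ has finite measure and $\|u_k\|_\infty\leq 1$, the sequence $\{u_k\}$ is bounded in $L^2[0,T]$, so by weak compactness of bounded sets in Hilbert space I extract a subsequence $u_{k_j}\rightharpoonup u^{\ast}$ weakly in $L^2$.

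Three properties of the weak limit $u^{\ast}$ then need to be verified. First, $\|u^{\ast}\|_\infty\leq 1$: setting $E=\{t:|u^{\ast}(t)|>1\}$ and testing the weak convergence against $g=\mathrm{sgn}(u^{\ast})\chi_E\in L^2$ yields $\int_E|u^{\ast}|\,dt=\lim_j\int u_{k_j}g\,dt\leq m(E)$, forcing $m(E)=0$. Second, $\int_0^T e^{-At}Bu^{\ast}(t)\,dt=-\xi$, since each component of $e^{-At}B$ lies in $L^2[0,T]$ and $\xi_{k_j}\to\xi$; this simultaneously places $\xi$ in $\mathcal{R}$ and shows $u^{\ast}\in\mathcal{U}(\xi)$. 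Third, $\|u^{\ast}\|_1\leq\liminf_j\|u_{k_j}\|_1\leq\alpha$, by weak lower semicontinuity of the convex, $L^2$-continuous functional $\|\cdot\|_1$ on $L^2[0,T]$. Combining, $V(\xi)=V_1(\xi)\leq\|u^{\ast}\|_1\leq\alpha$, so $\xi\in\mathcal{R}_\alpha$.

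The main technical hurdle is obtaining weak lower semicontinuity of the $L^1$ norm along the extracted subsequence, and here the reduction $V=V_1$ is essential: it replaces a functional with no such semicontinuity (the $L^0$ norm) by a convex one that is weakly lower semicontinuous on $L^2$. The remaining steps are straightforward manipulations with the pointwise bound and the integral reachability constraint.
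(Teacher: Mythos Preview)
Your argument is correct: the reduction $V=V_1$ from Theorem~\ref{thm:relation-L0&L1}, weak compactness of the sequence $\{u_k\}$ in $L^2[0,T]$, and weak lower semicontinuity of the convex $L^1$ functional combine to give a clean proof, and each of your three verification steps for the weak limit is valid as stated. The paper itself does not supply a proof but simply cites \cite[Lemma~4]{IkeNagOno-soav}; your sequential-closedness argument via weak limits is the standard route and almost certainly coincides with the cited one, so no substantive comparison is possible beyond noting that you have filled in what the paper omits.
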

\begin{proof}
See the proof of \cite[Lemma 4]{IkeNagOno-soav}.
\end{proof}

\begin{lemma}
\label{lem:boundary}
The reachable set $\mathcal{R}$ is characterized by
\begin{equation}
  \mathcal{R}=\{\xi\in\mathcal{R}:V(\xi)\leq T\}.
\label{eq:R}
\end{equation}
In particular, if the pair $(A,\,B)$ is controllable, then
\begin{equation}
  \partial \mathcal{R}=\{\xi\in\mathcal{R}:V(\xi)=T\}.
\label{eq:partial-R}
\end{equation}
\end{lemma}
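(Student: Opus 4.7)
The plan is to dispose of \eqref{eq:R} at once and then establish \eqref{eq:partial-R} by proving two set inclusions, leveraging the identity $V(\xi)=V_1(\xi)$ on $\mathcal{R}$ from Theorem~\ref{thm:relation-L0&L1} together with the convexity of $\mathcal{R}$ and its symmetry about the origin (the latter follows because $u\in\mathcal{U}(\xi)$ implies $-u\in\mathcal{U}(-\xi)$). For \eqref{eq:R}, the inclusion $\supset$ is immediate, and the reverse holds because any feasible $u$ on $[0,T]$ satisfies $\|u\|_0\leq m([0,T])=T$, so $V(\xi)\leq T$ for every $\xi\in\mathcal{R}$.

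For the direction $\{\xi\in\mathcal{R}:V(\xi)=T\}\subset\partial\mathcal{R}$, I argue by contrapositive. If $\xi\in\mathrm{int}\,\mathcal{R}$ and $\xi\neq 0$, then convexity of $\mathcal{R}$ together with $0\in\mathcal{R}$ supplies some $c>1$ with $c\xi\in\mathcal{R}$, so there exists $v$ with $\|v\|_\infty\leq 1$ and $\int_0^T e^{-At}Bv(t)\,dt=-c\xi$. The rescaled control $u\triangleq v/c$ then lies in $\mathcal{U}(\xi)$ with $\|u\|_\infty\leq 1/c$, giving $V(\xi)=V_1(\xi)\leq\|u\|_1\leq T/c<T$. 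The remaining case $\xi=0$ is trivial, since $u\equiv 0$ yields $V(0)=0$.

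For the reverse inclusion, fix $\xi\in\partial\mathcal{R}$. By symmetry, $-\xi\in\partial\mathcal{R}$, and the supporting hyperplane theorem yields $q\neq 0$ such that $q^{\mathrm{T}}\eta\leq -q^{\mathrm{T}}\xi$ for every $\eta\in\mathcal{R}$. Writing an arbitrary element of $\mathcal{R}$ as $\int_0^T e^{-At}Bv(t)\,dt$ with $\|v\|_\infty\leq 1$ and maximising pointwise shows that this upper bound equals $\int_0^T|B^{\mathrm{T}}e^{-A^{\mathrm{T}}t}q|\,dt$. For any $u\in\mathcal{U}(\xi)$ the feasibility condition $\int_0^T e^{-At}Bu(t)\,dt=-\xi$ attains that upper bound, whence
\[
  \int_0^T \bigl[B^{\mathrm{T}}e^{-A^{\mathrm{T}}t}q\bigr]\,u(t)\,dt \;=\; \int_0^T \bigl|B^{\mathrm{T}}e^{-A^{\mathrm{T}}t}q\bigr|\,dt.
\]
Since the integrand on the left is pointwise dominated by the integrand on the right, equality of the integrals forces $u(t)=\mathrm{sign}\bigl(B^{\mathrm{T}}e^{-A^{\mathrm{T}}t}q\bigr)$ almost everywhere on the set where the argument is nonzero. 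The main obstacle is precisely to show this set has full measure in $[0,T]$: the function $t\mapsto B^{\mathrm{T}}e^{-A^{\mathrm{T}}t}q$ is real-analytic, so it vanishes on a positive-measure set only if it vanishes identically, and identical vanishing would by controllability of $(A,B)$ force $q=0$, a contradiction. Hence $|u(t)|=1$ almost everywhere, $\|u\|_0=T$ for every $u\in\mathcal{U}(\xi)$, and therefore $V(\xi)=T$.
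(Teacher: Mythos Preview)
Your proof is correct. The paper itself does not supply a proof of this lemma---it simply cites \cite[Lemma~5]{IkeNagOno-soav}---so there is no in-paper argument to compare against. Your approach (a scaling argument via $V=V_1$ for interior points, and a supporting-hyperplane argument combined with real-analyticity and controllability for boundary points) is self-contained and standard for this type of result.

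Two cosmetic remarks, neither affecting correctness. First, in the contrapositive step the existence of $c>1$ with $c\xi\in\mathcal{R}$ follows directly from $\xi\in\mathrm{int}\,\mathcal{R}$ and $\xi\neq 0$ (take $c=1+\varepsilon/(2\|\xi\|)$); convexity and $0\in\mathcal{R}$ are not actually needed there. Second, the detour through $-\xi$ and the symmetry of $\mathcal{R}$ is avoidable: a supporting hyperplane at $\xi$ itself gives $q^{\mathrm{T}}\xi=\int_0^T|B^{\mathrm{T}}e^{-A^{\mathrm{T}}t}q|\,dt$, and then feasibility yields the same integral identity with a sign, forcing $u(t)=-\mathrm{sign}\bigl(B^{\mathrm{T}}e^{-A^{\mathrm{T}}t}q\bigr)$ a.e.\ on the nonzero set. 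Either way the conclusion $|u(t)|=1$ a.e.\ is the same, and your use of analyticity plus controllability to rule out a zero set of positive measure is exactly the right mechanism.
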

\begin{proof}
See the proof of \cite[Lemma 5]{IkeNagOno-soav}.
\end{proof}

From these lemmas, we can show the continuity of the value function $V(\xi)$.
\begin{theorem}
If the pair $(A,\,B)$ is controllable,
then $V(\xi)$ is continuous on $\mathcal{R}$.
\label{continuity}
\end{theorem}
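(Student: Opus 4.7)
The plan is to split $\mathcal{R}$ into its interior and its boundary and handle them separately, exploiting $V(\xi)=V_1(\xi)$ on $\mathcal{R}$ from Theorem~\ref{thm:relation-L0&L1} together with three ingredients already proved: convexity of $V$ (Theorem~\ref{thm:convex}), closedness of every sublevel set $\mathcal{R}_\alpha$ (Lemma~\ref{lem:closedness}), and the identification $\partial\mathcal{R}=\{\xi\in\mathcal{R}:V(\xi)=T\}$ under controllability (Lemma~\ref{lem:boundary}).

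First I would reformulate Lemma~\ref{lem:closedness} as the statement that $V$ is lower semicontinuous on $\mathcal{R}$: if $\xi_k\to\xi_0$ in $\mathcal{R}$ and $\alpha=\liminf_k V(\xi_k)$, then for each $\varepsilon>0$ all but finitely many $\xi_k$ lie in the closed set $\mathcal{R}_{\alpha+\varepsilon}$, forcing $\xi_0\in\mathcal{R}_{\alpha+\varepsilon}$ and hence $V(\xi_0)\le\alpha$.

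Next, for a point $\xi\in\mathrm{int}\,\mathcal{R}$ I would invoke the classical convex-analytic fact that a proper convex function on a convex subset of $\mathbb{R}^n$ is automatically continuous on the interior of its effective domain. Since Lemma~\ref{lem:boundary} gives the uniform bound $V(\xi)\le T<\infty$ throughout $\mathcal{R}$ and Theorem~\ref{thm:convex} provides convexity, this yields continuity of $V$ at every interior point of $\mathcal{R}$ without further work.

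For a boundary point $\xi\in\partial\mathcal{R}$, Lemma~\ref{lem:boundary} (this is where controllability enters) pins down $V(\xi)=T$. Given any sequence $\xi_k\in\mathcal{R}$ with $\xi_k\to\xi$, the same lemma gives $V(\xi_k)\le T$, while the lower semicontinuity established above forces $\liminf_k V(\xi_k)\ge V(\xi)=T$; these two bounds sandwich $V(\xi_k)\to T=V(\xi)$. The main obstacle is precisely this boundary behavior: convexity alone is not enough to guarantee continuity of a convex function at the edge of its domain (standard counterexamples exist already in the plane), so the argument relies on having an a priori pointwise upper bound equal to the actual value, which is exactly what controllability plus Lemma~\ref{lem:boundary} supply.
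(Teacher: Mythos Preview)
Your argument is correct and is essentially the proof the paper has in mind: the paper defers to \cite[Theorem~6]{IkeNagOno-soav}, but the very lemmas it sets up here (closed sublevel sets $\mathcal{R}_\alpha$, the boundary identity $\partial\mathcal{R}=\{V=T\}$ under controllability, and convexity of $V$) are exactly the three ingredients you combine---lower semicontinuity from Lemma~\ref{lem:closedness}, interior continuity from convexity plus the uniform bound $V\le T$, and boundary continuity from the squeeze $T=V(\xi)\le\liminf V(\xi_k)\le T$. Nothing is missing.
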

\begin{proof}
The proof is similar to the proof of \cite[Theorem 6]{IkeNagOno-soav}.
\end{proof}

\section{EXAMPLE}
\label{sec:example}
Theorem \ref{thm:relation-L0&L1} states that 
for a system that fails the normality assumption,
there exists a control which is $L^1$-optimal but not $L^0$-optimal.
Here, we give such an example.

Let us consider the double-integral system,
which is modelled by
\begin{equation}
 {\dot x(t)}
 =\begin{pmatrix}0&1\\0&0\end{pmatrix}x(t)
 +\begin{pmatrix}0\\1\end{pmatrix}u(t),\quad 0\leq t\leq T.
\label{eq:double-integral}
\end{equation}
Let the initial condition be
\[(x_1(0),\,x_2(0))=(\xi_1,\,\xi_2),\]
where $x(t)=(x_1(t),\,x_2(t))^{\mathrm{T}}$ and $\xi_{i}\in\mathbb{R}$ for $i=1,\,2$.

From~\cite[Control Law 8-3]{AthFal},
this system fails the normality assumption for the initial states satisfying 
\begin{equation}
  \frac{1}{2}\xi_2^2 < \xi_1,\quad  
  \xi_2<0,\quad 
  -\frac{\xi_2}{2}-\frac{\xi_1}{\xi_2}<T.
\label{eq:initial-state-ex1}
\end{equation}
That is,
the $L^1$-optimal control can not be determined uniquely for these initial states.
The set of all $L^1$-optimal controls consists of all controls such that 
\begin{align*}
  &0\leq u(t) \leq 1,\quad \forall t \in[0,T],\\
  &\int_0^T u(t) dt = -\xi_2,\quad
  \int_0^T \int_0^\theta u(t) dtd\theta = -\xi_1-\xi_2 T.
\end{align*}
The dashed line in Fig. \ref{fig:singular-L0&L1}  shows such an $L^1$-optimal control $u_1(t)$ 
(obtained via numerical optimization)
for the parameters
\begin{equation}
  T=5,\quad \xi_1=1,\quad \xi_2=-1,
\label{eq:parameter-ex1}
\end{equation}
which satisfy the condition \eqref{eq:initial-state-ex1}.
Clearly, this $L^1$-optimal control $u_1(t)$ is not $L^0$-optimal.

On the other hand, 
the following is also one of the $L^1$-optimal controls for the initial state satisfying \eqref{eq:initial-state-ex1}:
\[u_2(t)=
\begin{cases}
0,&  \mbox{if } t \in[0,\,t_1)\cup[t_2,\,T],\\
1,&  \mbox{if } t\in[t_1,\,t_2),
\end{cases}\]
where
\[
  t_1=\frac{\xi_2}{2}-\frac{\xi_1}{\xi_2},\quad
  t_2=-\frac{\xi_2}{2}-\frac{\xi_1}{\xi_2}.
\]
This gives a maximum hands-off control 
since the $L^1$-optimal control that takes only $\pm 1$ and $0$ is 
$L^0$-optimal as seen in the proof of Theorem \ref{thm:relation-L0&L1}.
The solid line in Fig. \ref{fig:singular-L0&L1} shows the control $u_2(t)$
for the parameters given in \eqref{eq:parameter-ex1}.
Obviously, the control $u_2(t)$ has much smaller $L^0$ norm than that of $u_1(t)$.
In other words,
the control $u_1(t)$ is $L^1$-optimal, but it is not $L^0$-optimal.

\begin{figure}[thpb]
      \centering
   \includegraphics[width=\linewidth]{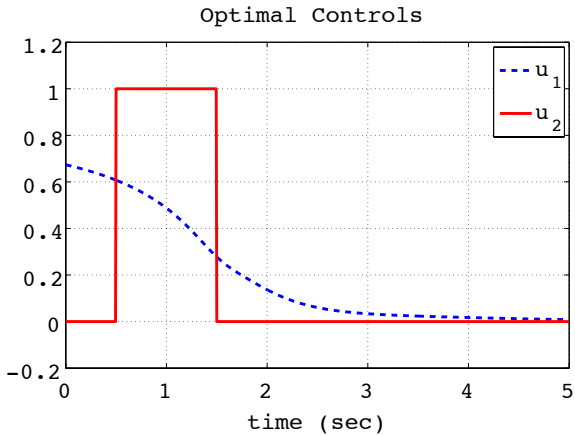}
  \caption{$L^1$-optimal control $u_1$ (dashed) and maximum hands-off control $u_2$ (solid)}
  \label{fig:singular-L0&L1}
\end{figure}

\section{CONCLUSION} 
\label{sec:conclusion}
In this paper, 
we have shown that
among feasible controls
there exists at least one maximum hands-off control,
and it has the bang-off-bang property.
This result is obtained by examining the $L^p$-optimal control
for $0<p<1$,
which is a natural relaxation for the $L^0$-optimal control.
Indeed, we have shown the equivalence between the maximum hands-off control and 
the $L^p$-optimal control.
This leads to the general relation between the maximum hands-off control 
and the $L^1$-optimal control,
that is,
any maximum hands-off control is given by an $L^1$-optimal control 
that has the bang-off-bang property,
but an $L^1$-optimal control is not necessarily $L^0$-optimal
in the absence of the normality assumption.
As an example for this, we have given the double-integral system.
Also we have proved the continuity and the convexity of the value function,
which can be used to prove the stability in model predictive control.

\addtolength{\textheight}{-12cm}   



\section*{APPENDIX}
\section*{PROOF OF PROPOSITION \ref{pro:L0=limLp}}
\label{app:L0=limLp}
For $f\in L^1$,
define
\begin{align*}
  E \triangleq \{t\in[0, T]: |f(t)|\leq1\}, \\
  F \triangleq \{t\in[0, T]: |f(t)|>1\}.
\end{align*}
Then we have
\[
  \|f\|_p^p=\int_{E} |f(t)|^p dt + \int_{F} |f(t)|^p dt.
\]
On the right hand side,
if we take $p\to0+$,
the integrand of the first term increases, and that of the second term decreases.
It follows from Lebesgue's monotone convergence theorem~\cite{Rud} that
\[
 \lim_{p\to0+} \|f\|_{p}^{p} 
  = \int_{E}\phi_{0}(f(t)) dt + \int_{F} \phi_{0}(f(t)) dt
  = \|f\|_0,
\]  
where $\phi_{0}$ is the $L^0$ kernel function defined by \eqref{eq:L0-kernel}.




\end{document}